\newtheorem{theorem}{Theorem}
\newtheorem{proposition}[theorem]{Proposition}
\newtheorem{corollary}[theorem]{Corollary}
\newtheorem{definition}[theorem]{Definition}
\newenvironment{proof}{\noindent\textbf{Proof:}\ }{\hfill$\Box$\medskip\par}
{\theorembodyfont{\rmfamily}}
{\theorembodyfont{\rmfamily}}
\title{Polymorphisms and Circuit Complexity}
\author{Gustav Nordh\thanks{E-mail: {\tt gustav.nordh@gmail.com}.
   }
}
\date{\today}
\begin{document}

\maketitle

\begin{abstract}
We present a framework for studying circuit complexity that is inspired by techniques that are used for analyzing the complexity of CSPs. We prove that the circuit complexity of a Boolean function $f$ can be characterized by the partial polymorphisms of $f$'s truth table. Moreover, the non-deterministic circuit complexity of $f$ can be characterized by the polymorphisms of $f$'s truth table. 
\end{abstract}

\section{Introduction}
\label{section:introduction}
It is well known that almost all Boolean functions require circuits of exponential size, but so far we have not been able to pinpoint a single explicit function requiring circuits larger than $5n$.  
The basic idea of our approach is to make use of techniques and results for analyzing the complexity of SAT problems to get a better understanding of circuit complexity.

Let SAT($S$) denote the SAT problem restricted to instances that are conjunctions of constraints build over the relations in $S$. 
The complexity of SAT($S$) is characterized (up to polynomial-time reducibility) by the polymorphisms of $S$, denoted $Pol(S)$~\cite{TCS98}. For now, think about polymorphisms of $S$ as a generalized form of automorphisms, i.e., operations preserving the structure $S$. The richer the polymorphisms of a structure $S$ is, the \enquote{simpler} the structure is. Indeed, SAT($S$) is in P if $S$ has a non-trivial polymorphism, and NP-complete otherwise. It was observed in~\cite{JLNZ03} that the partial polymorphisms of $S$ (i.e., polymorphisms that may be undefined on some inputs), denoted $pPol(S)$, 
paints a more fine grained picture for the complexity of SAT($S$). For example, if $pPol(S) \subseteq pPol(S')$ and SAT($S$) is solvable in $O(c^n)$ time, then SAT($S'$) is solvable in the same time $O(c^n)$ ($n$ denotes the number of variables).

Let $B_{n}$ denote the set of all Boolean functions with $n$ inputs and $1$ output.
Given $f \in B_{n}$ let $f^{\bullet}$ denote the truth table of $f$ i.e., the $(n+1) \times 2^n$ matrix where the first $n$ columns represents the inputs to $f$, the last column represents the output of $f$, and the rows of $f^{\bullet}$ are sorted in lexicographic order.

Our first observation is that if $f^{\bullet}$ is preserved by a non-trivial polymorphism (i.e., one that is not essentially unary or constant), then $f$ has a circuit of size $O(n)$ (Section~\ref{sec:upper}). Hence, it seems that our intuition from SAT($S$) carries over, if the polymorphisms of $f^{\bullet}$ are \enquote{rich}, then $f$ is simple (i.e., has low circuit complexity). To argue in the other direction, namely, that if the polymorphisms of $f^{\bullet}$ are not rich, then $f$ has high circuit complexity, we first give some general intuition.

To compute a Boolean function $f$, we need to avoid all potential errors, i.e., not output a $0$ on some input for which $f$ is $1$ or vice versa. In many computational models (e.g., Turing Machines, NFAs, or circuits), a computation consists of a composition of primitive computation steps/transitions/gates. 

\textbf{Any function $w$ that is not a 
polymorphism of $f^{\bullet}$ represent a potential error. For each such $w$, there must be at least one primitive (step/transition/gate) that catch/cover this error by not being preserved by $w$. Otherwise, $w$ is a polymorphism of the computation, and the error represented by $w$ manifest itself.} 

The smallest number of individual steps/transitions/gates that cover all the potential errors of $f$ (i.e., all $w$ that are not polymorphisms of $f^{\bullet}$) is a lower bound on the complexity of $f$. Thus, the \enquote{poorer} the polymorphisms of $f^{\bullet}$ are, the larger the complexity of $f$ is.

This line of thinking is inspired by the method of approximation that was introduced by Razborov in his celebrated monotone circuit lower bound results~\cite{Razborov85b, Razborov85a} and further extended in~\cite{Razborov89}. The method of approximation was put in a different framework by Karchmer~\cite{Karchmer93} and the method presented in this framework was coined the fusion method by Wigderson in his survey of the topic~\cite{Wigderson93}. For those familiar with this previous line of work, we remark that the notion of a \enquote{fusion functional} (as used by Karchmer and Wigderson) corresponds to the functions $w$ that are not polymorphisms of $f^{\bullet}$. 

Our first main result (Section~\ref{sec:partial}) is that, given $f \in B_n$, the smallest number of gates that cover all (witnesses of) \emph{partial} functions $w$, such that $w\notin pPol(f^{\bullet})$, is exactly the number of gates in an optimal circuit for $f$. Our second result (Section~\ref{sec:total}) is that the smallest number of gates that cover all (witnesses of) \emph{total} functions $w$, such that $w \notin Pol(f^{\bullet})$, equals the \emph{non-deterministic} circuit size of $f$, up to a constant factor.

\section{Preliminaries}
\subsection{Function algebra}
Any operation on $\{0,1\}$ can be extended in a standard way to an operation on tuples 
over $\{0,1\}$, by applying the operation componentwise as follows.
\begin{definition}
Let $w \in B_k$ and let $R$ be an $n$-ary relation over $\{0,1\}$. For any collection of $k$ tuples, 
$t_1,t_2, \dots, t_k \in R$, the $n$-tuple $w(t_1,t_2, \dots ,t_k)$ is defined as follows:
$w(t_1,t_2, \dots ,t_k) = (w(t_1[1],t_2[1], \dots, t_k[1]),$ $w(t_1[2],t_2[2], \dots, t_k[2]),\dots,$ $w(t_1[n],t_2[n], \dots, t_k[n]))$,
where $t_j[i]$ is the $i$th component in tuple $t_j$.
\end{definition}

\begin{definition}
If $w$ is an operation such that for all $t_1,t_2, \dots, t_k \in R$
$w(t_1,t_2, \dots ,t_k) \in R$, then $R$ is closed under $w$. An operation $w$ such that $R$ is closed under $w$ is called a polymorphism of $R$. The set of all polymorphisms of $R$ is denoted $Pol(R)$.
\end{definition}

Consider the following two binary functions $f$ and $g$, which truth tables $f^{\bullet}$ and $g^{\bullet}$ are given below.

\begin{table}[h]
\begin{center}
\begin{tabular}{cc|c}
$x_1$&$x_2$&$f$\\
\hline
$0$&$0$&$0$\\
$0$&$1$&$0$\\
$1$&$0$&$0$\\
$1$&$1$&$1$\\
\end{tabular}
\quad
\begin{tabular}{cc|c}
$x_1$&$x_2$&$g$\\
\hline
$0$&$0$&$1$\\
$0$&$1$&$0$\\
$1$&$0$&$0$\\
$1$&$1$&$1$\\
\end{tabular}
	\caption{Truth tables $f^{\bullet}$ and $g^{\bullet}$}
	\label{tab:ExampleFunctions}
	\end{center}
\end{table}
Note that neither $f^{\bullet}$ nor $g^{\bullet}$ is closed under the ternary majority operation $maj(x,x,y)=maj(x,y,x)=maj(y,x,x) = x$. For $f^{\bullet}$, there is only one \enquote{witness} that $maj$ is not a polymorphism of $f^{\bullet}$, namely applying the $maj$ operation to the last 3 tuples, results in the tuple $(1,1,0)$ which is not in $f^{\bullet}$. The application of $maj$ to any other combination of three tuples in $f^{\bullet}$, results in a tuple in $f^{\bullet}$. If we consider $g^{\bullet}$ instead, applying $maj$ to \emph{any} 3 distinct tuples, results in a tuple which is not in $g^{\bullet}$. Hence, it seems natural to consider $g$ as being further away from being closed under $maj$ than $f$. Further more, every witness of the fact that an operation $w$ is not a polymorphism of the truth table of a function, constitute a potential error that any circuit computing the function must catch. Hence, not only do we need to keep track of operations $w$ that are not polymorphisms of the truth table of the function, but also the set of all witnesses of this (i.e., all combinations of tuples from the truth table, for which applying $w$ results in a tuple which is not in the truth table).

\begin{definition}
For $f \in B_n$, let $\overline{Pol}(f^{\bullet})$ denote the set of all functions $w \in B_{2^n}$ such that $w$ applied to the tuples in $f^{\bullet}$ (sorted in lexicographic order) results in a tuple that is not in $f^{\bullet}$. Hence, each $w \in \overline{Pol}(f^{\bullet})$ represents a witness that some function is not a polymorphism of $f^{\bullet}$. We sometimes refer to functions $w \in \overline{Pol}(f^{\bullet})$ as anti-polymorphisms.
\end{definition}

If we reconsider the truth tables of the functions $f$ and $g$ above, and consider: 
\begin{center}
$w_1(x_1,x_2,x_3,x_4) = maj(x_1,x_2,x_3)$, \\
$w_2(x_1,x_2,x_3,x_4) = maj(x_1,x_2,x_4)$, \\
$w_3(x_1,x_2,x_3,x_4) = maj(x_1,x_3,x_4)$, \\
$w_4(x_1,x_2,x_3,x_4) = maj(x_2,x_3,x_4)$. 
\end{center}
We have $w_1,\dots,w_4 \in \overline{Pol}(g^{\bullet})$ but only $w_4 \in \overline{Pol}(f^{\bullet})$.

Let $P_{n}$ denote the set of all partial Boolean operations/functions with $n$ inputs and $1$ output (i.e., operations that may be undefined for some inputs).
The concept of polymorphisms has a natural extension to partial operations.
\begin{definition}
Let $w \in P_k$ and $R$ an $n$-ary relation, then $R$ is closed under $w$ if for all $t_1,t_2, \dots, t_k \in R$ either $w(t_1,t_2, \dots ,t_k) \in R$ or at least one of $w(t_1[1],t_2[1], \dots, t_k[1]),$ $w(t_1[2],t_2[2], \dots, t_k[2]),\dots,$ $w(t_1[n],t_2[n], \dots, t_k[n])$ is undefined. A $w \in P_k$ such that $R$ is closed under $w$ is called a partial polymorphism of $R$. The set of all partial polymorphisms of $R$ is denoted $pPol(R)$. Note that $Pol(R) \subseteq pPol(R)$.
\end{definition}

\begin{definition}
For $f \in B_n$, let $\overline{pPol}(f^{\bullet})$ denote the set of all functions $w \in P_{2^n}$ such that $w$ applied to the tuples in $f^{\bullet}$ (sorted in lexicographic order) results in a tuple that is not in $f^{\bullet}$. Hence, each $w \in \overline{pPol}(f^{\bullet})$ represents a witness that some (partial) function is not a partial polymorphism of $f^{\bullet}$.
\end{definition}

\subsection{Circuits}
A Boolean circuit is a directed acyclic graph with three types of labeled vertices: sources (in-degree $0$) labeled $x_1,\dots,x_n$, a sink (the output), and vertices with in-degree $k > 0$ are gates labeled by Boolean functions on $k$ inputs. Unless otherwise specified, we assume the gates of the circuit to be fan-in two $\land$ and $\lor$ gates together with $\neg$ gates.
A non-deterministic circuit has, in addition to the ordinary inputs $x = (x_1,\dots,x_n)$, a set of \enquote{non-deterministic} inputs $y=(y_1,\dots,y_m)$. A non-deterministic circuit $C$ accepts input $x$ if there exists $y$ such that the circuit output $1$ on $(x,y)$. A co-non-deterministic circuit $C$ rejects an input $x$ if there exists $y$ such that $C$ output $0$ on $(x,y)$. Let $|C|$ denote the number of gates of a circuit $C$.

A family of non-deterministic circuits $\{C_n\}_{n \geq 0}$, with $C_n$ having $n$ (ordinary) input gates, decide a language $L$ if each $C_n$ decide $L_n$ (i.e., $C_n$ accepts $x$ if and only if $|x| =n$ and $x \in L$). 
The class $NP/poly$ is defined as the class of languages decidable by non-deterministic circuit families $\{C_n\}$, with $|C_n| \leq poly(n)$. Recall that $P/poly$ is the class of languages decidable by (deterministic) circuit families $\{C_n\}$, with $|C_n| \leq poly(n)$.
Similarly, $coNP/poly$ is the class of languages decidable by polynomial size co-non-deterministic circuit families.

\subsection{Covers}
To be able investigate $Pol(f^{\bullet})$ in relation to the circuit complexity of $f \in B_n$ we need to introduce the computational model that we use. We define a $\land$ gate to be any $3 \times 2^n$ Boolean matrix where the third column is the $\land$ of the first two columns. A $\lor$ gate is defined analogously, and a $\neg$ gate is any $2 \times 2^n$ Boolean matrix where the second column is the complement of the first. An input (gate) is any $1 \times 2^n$ matrix that is one of the first $n$ column vectors in $f^{\bullet}$ (i.e, the inputs to $f$). 

Given a gate $g_i$ we denote its input columns (in case they exist) by $g_{i_1}$ and $g_{i_2}$ (a gate may have just one input ($\neg$), or no inputs (i.e., an input gate)), and its output column (the last column in its matrix) by $g_i$ (all gates have an output). 
Denote the function of the gate $g_i$ by $\circ_i$, e.g., $\circ_i \in \{\land,\lor,\neg\}$.
A circuit (or straight line program) is a sequence of gates $P = (g_1,g_2,\dots,g_t)$ (sometimes viewed as a $t \times 2^n$ matrix) such that the first $n$ gates $g_1,\dots,g_n$ are the input gates (i.e., the first $n$ columns of $f^{\bullet}$), and for every $i>n$, the inputs of $g_i$, i.e., 
$g_{i_1}$ and $g_{i_2}$, satisfy $i_1,i_2 < i$. That is, the inputs of $g_i$ must be the outputs of a gate preceding it in the sequence. The computation of $P$ on input $x \in \{0,1\}^n$ is defined as $P(x) = g_1(x) \cdots g_t(x) = u \in \{0,1\}^t$, where $u$ consists of the outputs of all gates in $P$ when propagating the input $x$ through the circuit (i.e., $u$ is the row of the matrix $P = (g_1,g_2,\dots,g_t)$ corresponding to the input $x$). For example, 
$u_i = u_{i1} \circ_i u_{i2}$. We say that $P$ computes $f \in B_{n}$ if $g_t(x) = f(x)$ for all $x \in \{0,1\}^n$. 

The key for obtaining a lower bound is the observation that if some $P$ of length $t$ computes $f$ (i.e., $f$ has a circuit of size $t-n$) then every $w \in \overline{Pol}(f^{\bullet})$ must fail to be consistent with $P$, i.e., 
for some $1 \leq i \leq t$, $w(g_{i_1}) \circ_i w(g_{i_2}) \neq w(g_i)$.
\begin{proposition}
\label{prop:consistent}
If $w \in \overline{Pol}(f^{\bullet})$ is consistent with the program $P = (g_1,g_2,\dots,g_t)$, then $P$ does not compute $f$.
\end{proposition}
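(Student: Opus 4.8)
The plan is to prove the contrapositive: assuming $w$ is consistent with $P$, I will show that if $P$ \emph{did} compute $f$, then $w$ would fail to be an anti-polymorphism, i.e.\ $w \notin \overline{Pol}(f^{\bullet})$. The first move is to extract from $w$ a single point of $\{0,1\}^n$ that will act as a fictitious input to the circuit: let $c_1,\dots,c_n$ be the input columns of $P$, which by definition are the first $n$ columns of $f^{\bullet}$, and set $a_j := w(c_j)$ for $j = 1,\dots,n$ (here $w \in B_{2^n}$ is applied to the $2^n$-bit column $c_j$, yielding a single bit). Put $a = (a_1,\dots,a_n)$.

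The core step is an induction along the straight-line program proving, for every gate $g_i$ of $P$, the identity $w(g_i) = g_i(a)$, where $g_i(a)$ denotes the bit output by gate $g_i$ when $a$ is propagated through $P$ (equivalently, the entry of column $g_i$ in the row indexed by $a$). For $i \leq n$ this is immediate, since $g_i(a) = a_i = w(g_i)$ by the choice of $a$. For $i > n$, the column $g_i$ equals $g_{i_1} \circ_i g_{i_2}$ entrywise, so consistency of $w$ with $P$ gives $w(g_i) = w(g_{i_1}) \circ_i w(g_{i_2})$; since $i_1, i_2 < i$, the induction hypothesis applies to $g_{i_1}$ and $g_{i_2}$, and we obtain $w(g_i) = g_{i_1}(a) \circ_i g_{i_2}(a) = g_i(a)$.

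To conclude, specialize the identity to the output gate $g_t$. If $P$ computes $f$, then the column $g_t$ coincides with the last column of $f^{\bullet}$ (both matrices having their rows indexed by the inputs $x \in \{0,1\}^n$ in lexicographic order), and $g_t(a) = f(a_1,\dots,a_n)$. Writing $c_{n+1}$ for the last column of $f^{\bullet}$, this yields $w(c_{n+1}) = w(g_t) = g_t(a) = f(a_1,\dots,a_n)$, which equals $f\big(w(c_1),\dots,w(c_n)\big)$ since $a_j = w(c_j)$. But this says exactly that applying $w$ columnwise to $f^{\bullet}$ produces a tuple that \emph{is} one of the rows of $f^{\bullet}$, i.e.\ $w \notin \overline{Pol}(f^{\bullet})$ — contradicting the hypothesis, so $P$ does not compute $f$. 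The whole argument is essentially bookkeeping, and I do not expect a genuine obstacle; the one thing to keep straight is the dual role of $w$, namely as a $2^n$-ary operation applied columnwise to the $2^n$-row matrices $f^{\bullet}$ and $P$ versus the single induced point $a \in \{0,1\}^n$ it determines, together with carrying out the induction in the topological order guaranteed by $i_1, i_2 < i$.
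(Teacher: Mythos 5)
Your proof is correct and follows essentially the same route as the paper: apply $w$ columnwise to the program matrix, use consistency to show the resulting vector is the computation trace of $P$ on the induced input $a=(w(c_1),\dots,w(c_n))$, and derive the contradiction at the output gate. The only difference is that you make explicit the induction along the topological order (which the paper leaves implicit in the claim that $u$ \enquote{represents a correct computation}) and phrase the argument contrapositively.
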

\begin{proof}
Assume that $P = (g_1,g_2,\dots,g_t)$ is a circuit computing $f$. Let $x_1,\dots,x_{n+1}$ be the columns of $f^{\bullet}$ and $z = (w(x_1),w(x_2),\dots,w(x_n))$. Since $w \in \overline{Pol}(f^{\bullet})$ we know that $f(z) \neq w(x_{n+1})$. 
Applying $w$ to the columns of the $t \times 2^n$ matrix $P$ results in a $t$-tuple $u = u_1 \dots u_t$ possibly corresponding to a correct computation of $z$ by the circuit. 
We know that $w$ is consistent with $P$, i.e., 
$w(g_{i_1}) \circ_i w(g_{i_2}) = w(g_i)$
for all gates $g_i$, and hence, 
$u_i = u_{i_1} \circ_i u_{i_2}$, and $u$ represents a correct computation of $P$ on input $z$. This leads to a contradiction since $f(z) = u_t \neq w(x_{n+1}) = u_t$.
\end{proof}
A gate $g_i$ is said to cover $w \in \overline{Pol}(f^{\bullet})$ if $w(g_{i_1}) \circ_i w(g_{i_2}) \neq w(g_i)$
\begin{definition}
\label{def:cover2}
A collection of gates $\mathcal{T}$ that cover all $w \in \overline{Pol}(f^{\bullet})$ 
is said to be a $Pol$ cover for $f$. A minimal $Pol$ cover (in terms of number of gates) for $f$ is denoted $\mathcal{T}(f)$, and its size (i.e., the number of gates) is denoted $|\mathcal{T}(f)|$.
\end{definition}
A cover for $f$ can be seen as an (unsorted) collection of gates that together catch all the potential errors that a circuit for $f$ must deal with. The idea is that since a cover is a simpler object than a circuit, it might be easier to prove lower bounds on the size of a cover for $f$ than the size of a circuit for $f$.

A gate $g_i$ is said to cover $w \in \overline{pPol}(f^{\bullet})$ if $w(g_{i_1})$ and $w(g_{i_2})$ are defined and 
$w(g_{i_1}) \circ_i w(g_{i_2}) \neq w(g_i)$
(i.e, if $w$ is defined on the inputs to $g_i$ but $w$ is undefined on the output of $g_i$ or not consistent with $g_i$).
A collection of gates $\mathcal{P}$ cover $\overline{pPol}(f^{\bullet})$ if each $w \in \overline{pPol}(f^{\bullet})$ is covered by at least one gate $g_i \in \mathcal{P}$.
\begin{definition}
\label{def:cover}
A collection of gates $\mathcal{P}$ that cover $\overline{pPol}(f^{\bullet})$ such that: (1) no two gates in $\mathcal{P}$ output the same result, (2) the result column of $f^{\bullet}$ is not an input of any gate in $\mathcal{P}$, and (3) none of the input columns of $f^{\bullet}$ is an output of a gate in $\mathcal{P}$, is said to be a $pPol$ cover for $f$. A minimal $pPol$ cover (in terms of number of gates) for $f$ is denoted $\mathcal{P}(f)$, and its size (i.e., the number of gates) is denoted $|\mathcal{P}(f)|$.
\end{definition}
We remark that conditions (1)-(3) in Definition~\ref{def:cover} are used to avoid cycles when converting a $pPol$ cover to a circuit later on. They can be replaced by requiring $pPol$ covers to be acyclic. 

\section{Non-trivial polymorphisms implies trivial circuits}
\label{sec:upper}
In this section we note that if $Pol(f^{\bullet})$ contains a non-trivial polymorphism, then the circuit complexity of $f \in B_{n}$ is at most $O(n)$. By a non-trivial polymorphism we mean any polymorphism which is not a constant function, a projection, or the negation of a projection.

\begin{theorem}
Given $f \in B_{n}$, if $Pol(f^{\bullet})$ contains a non-trivial polymorphism, then $f$ has a circuit of size $O(n)$.
\end{theorem}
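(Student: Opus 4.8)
The plan is to show that a non-trivial polymorphism of $f^{\bullet}$ forces $f$ into one of a few extremely rigid syntactic forms, each trivially computable by $O(n)$ gates. Identify the rows of $f^{\bullet}$ with the $(n+1)$-ary relation $R=\{(a,f(a)):a\in\{0,1\}^n\}$, the graph of $f$, so that $Pol(f^{\bullet})$ is the set of arity-$2^n$ polymorphisms of $R$; the hypothesis then supplies an operation $w$ preserving $R$ that depends on at least two of its inputs (anything that is not constant, a projection, or a negated projection is essentially non-unary). The first, and main, step is to reduce to the case where $R$ is preserved by one of four fixed low-arity operations. Since $\langle w\rangle\subseteq Pol(R)$ and $\langle w\rangle$ is generated by an essentially non-unary function, $\langle w\rangle$ is not contained in the clone generated by negation and the constants; and by Post's classification of Boolean clones (concretely, by the list of atoms of Post's lattice) any such clone contains at least one of $\wedge$, $\vee$, the ternary majority $\mathrm{maj}$, or the ternary sum $x\oplus y\oplus z$ (the minority operation). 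Hence $R$ is closed under one of these four operations.

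Second, I would unwind each case into a description of $f$, using that $R$ is a functional relation whose first $n$ coordinates range over all of $\{0,1\}^n$. If $R$ is closed under $\wedge$ then $f(a\wedge b)=f(a)\wedge f(b)$ for all $a,b$; writing each $a\neq(1,\dots,1)$ as the coordinatewise meet of the vectors $\bar e_i$ (all ones except a $0$ in position $i$) over the positions $i$ with $a_i=0$ gives $f(a)=\bigwedge_{i:a_i=0}f(\bar e_i)$, so $f$ is a conjunction of a subset of its input variables, or the constant $0$. Dually, closure under $\vee$ makes $f$ a disjunction of a subset of its inputs, or the constant $1$. If $R$ is closed under $\mathrm{maj}$ then $R$ is bijunctive, i.e.\ the set of models of a $2$-CNF over $z_1,\dots,z_{n+1}$; since $R$ is the graph of a total function, the clauses involving $z_{n+1}$ force $f$ to be simultaneously a disjunction of literals and a conjunction of literals, and then $|f^{-1}(0)|+|f^{-1}(1)|=2^n$ forces both expressions to be a single literal, so $f$ is a literal or a constant. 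Finally, if $R$ is closed under $x\oplus y\oplus z$ then $R$ is an affine subspace of $\{0,1\}^{n+1}$ of dimension $n$, so $f(a)=c\oplus\bigoplus_{i\in S}a_i$ is an affine function.

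It then remains to observe that each of these forms has an $O(n)$-size circuit over $\{\wedge,\vee,\neg\}$: a conjunction or disjunction of at most $n$ literals costs at most $n-1$ binary gates (plus up to $n$ negations); a literal or a constant costs $O(1)$ gates; and an affine function $c\oplus\bigoplus_{i\in S}a_i$ costs $O(|S|)=O(n)$ gates, each binary XOR being simulated by a constant number of $\wedge,\vee,\neg$ gates. I expect the real work to be concentrated in the first step — invoking the structure of Post's lattice to pass from an arbitrary non-trivial polymorphism to one of the four canonical ones — and in the $\mathrm{maj}$ case of the second step, where one must verify that the graph of a total Boolean function is bijunctive only when the function is essentially unary; everything else is routine.
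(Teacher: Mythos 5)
Your argument is correct, and it shares its first step with the paper — both proofs invoke Post's lattice to replace an arbitrary essentially non‑unary polymorphism by one of $\wedge$, $\vee$, $maj$, or $x\oplus y\oplus z$ — but the second step is genuinely different. The paper keeps $f$ as a black box: it hardwires the $2n+2$ values of $f$ on the all‑zero input, the all‑one input, and the weight‑$1$ and weight‑$(n{-}1)$ inputs, and then folds the surviving polymorphism over these values along the coordinates of $x$, obtaining a uniform $O(n)$‑size circuit in all four cases. You instead push the closure condition all the way to a classification of $f$ itself: closure of the graph under $\wedge$ (resp.\ $\vee$) makes $f$ a meet‑ (resp.\ join‑) homomorphism and hence a monotone conjunction (resp.\ disjunction) of variables or a constant; closure under $maj$ plus bijunctivity plus totality forces $f$ to be a literal or a constant; and closure under $x\oplus y\oplus z$ makes the graph an $n$‑dimensional affine subspace, i.e.\ $f$ affine. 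Your route buys a strictly stronger conclusion — an explicit description of \emph{every} $f$ whose truth table admits a non‑trivial polymorphism, which makes the smallness of this class vivid — at the price of importing the Schaefer‑style characterizations of majority‑closed relations as $2$‑CNF‑definable and of minority‑closed relations as affine (the $\wedge$/$\vee$ cases you derive directly). The paper's construction is more self‑contained and, as it notes, extends more readily to multi‑output functions and to functions that are merely \emph{close} to having a non‑trivial polymorphism, since it never needs to know what $f$ looks like globally. The only points worth tightening in your write‑up are the boundary cases in the $\wedge$/$\vee$ arguments (the value at the all‑ones, resp.\ all‑zeros, input must be handled via the closure identity, which forces $f$ constant when that value is the absorbing one) and, in the $maj$ case, the observation that clauses not mentioning $z_{n+1}$ are tautologies because the graph projects onto all of $\{0,1\}^n$; both are routine and your sketch already signals awareness of them.
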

\begin{proof}
By inspection of Post's lattice of Boolean clones~\cite{post}, we know that if $Pol(f^{\bullet})$ contains a non-trivial polymorphism, then it must contain at least one of the following four polymorphisms: 
\begin{enumerate} 
\item the majority operation $maj(x,x,y)=maj(x,y,x)=maj(y,x,x) = x$ 
\item the affine operation $aff(x,y,z) = x \oplus y \oplus z$ (where $\oplus$ is addition modulo $2$)
\item the and operation $and(x,y) = x \land y$
\item the or operation $or(x,y) = x \lor y$
\end{enumerate}

Given $f \in B_{n}$, in order to design our circuit $C$ we first pre-compute $f$ on the all $0$ input, the all $1$ input, the $n$ inputs having exactly one $1$, and the $n$ inputs having exactly one $0$. More formally, $t_{i}$ ($1 \leq i \leq n$) is the output of $f$ on the input that has a unique $1$ in position $i$, $t_{n+i}$ ($1 \leq i \leq n$) is the output of $f$ on the input that has a unique $0$ in position $i$, $t_{2n+1}$ is the output of $f(0,0,\dots,0)$, and $t_{2n+2}$ is the output of $f(1,1,\dots,1)$. We hard-wire these $(2n+2)$ bits of information $t_{i}$, $1 \leq i \leq 2n+2$ in our circuit. 

The task of the circuit $C$ on input $x = (x_1,x_2,\dots,x_n)$ is to repeatedly apply the non-trivial polymorphism of $f^{\bullet}$ to these $(2n+2)$ bits $t_{i}$ until we arrive at the output of $f(x)$.

In the case where $Pol(f^{\bullet})$ contains the $or$ operation: \\
01: $r := t_{2n+1}$;\\
02: for $1 \leq i \leq n$ \{ \\  
03: \quad if $x_i = 1$ \{ \\
04: \quad \quad $r := r \lor t_{i}$; \\
05: \quad \} \\
06: \} \\
07: return $r$; \\

In the case where $Pol(f^{\bullet})$ contains the $and$ operation: \\
01: $r := t_{2n+2}$;\\
02: for $1 \leq i \leq n$ \{ \\  
03: \quad if $x_i = 0$ \{ \\
04: \quad \quad $r := r \land t_{n+i}$; \\
05: \quad \} \\
06: \} \\
07: return $r$; \\

In the case where $Pol(f^{\bullet})$ contains the $aff$ operation: \\
01: $r := t_{2n+1}$;\\
02: for $1 \leq i \leq n$ \{ \\  
03: \quad if $x_i = 1$ \{ \\
04: \quad \quad $r := t_{2n+1} \oplus r \oplus t_{i}$; \\
05: \quad \} \\
06: \} \\
07: return $r$; \\

In the case where $Pol(f^{\bullet})$ contains the $maj$ operation: \\
01: $r := t_{2n+2}$;\\
02: for $1 \leq i \leq n$ \{ \\  
03: \quad if $x_i = 0$ \{ \\
04: \quad \quad $r := maj(t_{2n+1}, r, t_{n+i})$; \\
05: \quad \} \\
06: \} \\
07: return $r$; \\

To see that the circuit $C$ on input $x = (x_1,x_2,\dots,x_n)$ output $r = f(x)$, we consider the case 
where $or \in Pol(f^{\bullet})$ (the arguments in the other cases are very similar). In line 01 we initialize $r$
to be the output of $f(0,0,\dots,0)$. Then (in lines 02-04) we take the $\lor$ of all $f(0,\dots,0,x_i,0,\dots,0)$ for which $x_i = 1$. This is the final output $r$. The fact that $r = f(x)$ follows from $or \in Pol(f^{\bullet})$ since if we take the $\lor$ of all the inputs $(0,\dots,0,x_i,0,\dots,0)$ for which $x_i = 1$ (i.e., all the inputs corresponding to the outputs we took $\lor$ of), we arrive at the original input vector $x = (x_1,x_2,\dots,x_n)$.
Note that the circuit $C$ has size $O(n)$ as the number of bits that we hard wire is $O(n)$, and in each of the $n$ iterations of the for loop we carry out a constant number of operations.
\end{proof}

Note that the construction above is easy to extend to multi-output functions. Given a Boolean function $f$ with $n$ inputs and $m$ outputs, such that $Pol(f^{\bullet})$ (where $f^{\bullet}$ is now a $(n+m) \times 2^n$ matrix) contains a non-trivial polymorphism, the construction results in a circuit of size $O(nm)$.

Also note that it is easy to extend this upper bound to functions $f \in B_{n}$ for which $Pol(f^{\bullet})$ is \enquote{close} to contain a non-trivial polymorphism. For example, if we can modify at most $n^k$ outputs of $f$ such that the truth table of the resulting function $g^{\bullet}$ is closed under a non-trivial polymorphism, then $f$ has circuits of size $O(n^k)$. This is because we can hard wire in our circuit $C$ the correct outputs corresponding to the outputs that were modified. The circuit $C$ is then designed as before for computing $g$ instead. On input $x$ the circuit first checks whether $x$ corresponds to a modified output, and if this is the case, it looks up the correct output $f(x)$. Otherwise $g(x)$ is computed, as before.

\section{Partial polymorphisms and deterministic circuits}
\label{sec:partial}
In this section we prove that the circuit complexity of $f$ can be characterized by the partial polymorphisms of $f^{\bullet}$. More precisely, we prove that a collection of gates is a minimal $pPol$ cover for $f$ if and only if the collection of gates form an optimal circuit for $f$.
\begin{proposition}
$|\mathcal{P}(f)|$ is a lower bound on the circuit complexity of $f$.
\end{proposition}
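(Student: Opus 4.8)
The plan is to show that any circuit for $f$ gives rise to a $pPol$ cover of no larger size, so that the minimum circuit size is at least $|\mathcal{P}(f)|$. Let $C$ be an optimal circuit for $f$ with $s$ gates, and let $P = (g_1,\dots,g_n,g_{n+1},\dots,g_{n+s})$ be the corresponding straight line program, viewed as a $(n+s)\times 2^n$ matrix, where $g_1,\dots,g_n$ are the input columns of $f^{\bullet}$ and $g_{n+s}$ equals the output column of $f^{\bullet}$. The candidate cover is $\mathcal{P} = \{g_{n+1},\dots,g_{n+s-1}\}$ together with possibly one more gate to handle the output; more carefully, I would take $\mathcal{P}$ to be the set of all $s$ non-input gates of $P$, but then argue that the final gate (whose output column is $g_{n+s}$, the result column of $f^{\bullet}$) can be excised. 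So the first step is to verify the conditions (1)--(3) of Definition~\ref{def:cover}: after discarding duplicate-output gates and the gate computing the result column, and noting that input columns of $f^{\bullet}$ are never outputs of gates in $P$ by the definition of a straight line program, the remaining collection satisfies all three conditions, at the cost of at most a constant reduction in size (and never an increase). One must be slightly careful that removing the last gate still leaves a valid cover; the key point is that the output column of $f^{\bullet}$, being the last column, differs from $w(x_{n+1})$ for every $w$ we care about, so the error detection need not rely on that final gate — but this is exactly the subtle point and I flag it below.

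The second and central step is to show that $\mathcal{P}$ covers $\overline{pPol}(f^{\bullet})$. Fix any partial function $w \in \overline{pPol}(f^{\bullet})$, i.e.\ $w$ applied columnwise to $f^{\bullet}$ (whenever defined) yields a tuple not in $f^{\bullet}$; equivalently, writing $z = (w(x_1),\dots,w(x_n))$ where $x_1,\dots,x_{n+1}$ are the columns of $f^{\bullet}$, either some $w(x_j)$ ($j \le n$) is undefined, or $z$ is a genuine input and $f(z) \neq w(x_{n+1})$. In the first case $w$ is undefined on some input column, which is an output of no gate in $\mathcal{P}$, so vacuously the covering condition is about gates where $w$ is defined on the inputs; I would handle this by noting such $w$ need not be covered — wait, re-reading Definition~\ref{def:cover}, $w$ must still be covered by \emph{some} gate, so instead: propagate $w$ through $P$ as far as it is defined; if at some gate $g_i$ the inputs $w(g_{i_1}), w(g_{i_2})$ are defined but the output is undefined or inconsistent, that gate covers $w$. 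So the real claim is: if $w$ is defined on both inputs of \emph{every} non-input gate of $P$ and consistent at every such gate, then $w$ is total on all columns of $P$ and represents a valid computation, whence $g_{n+s}(z) = w(x_{n+1})$, i.e.\ $f(z) = w(x_{n+1})$, contradicting $w \in \overline{pPol}(f^{\bullet})$ (here $z$ is necessarily a legitimate input since all $w(x_j)$ are defined). This is the partial-function analogue of Proposition~\ref{prop:consistent}, and the argument is essentially identical: a consistent, fully-defined propagation of $w$ through the circuit must agree with the circuit's actual computation on the input $z$.

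The main obstacle I anticipate is the bookkeeping around conditions (1)--(3) and the removal of the output gate: one must confirm that after these surgeries the collection still covers every $w$, and in particular that deleting the gate whose output is the result column of $f^{\bullet}$ does not uncover some $w$ whose only violation occurred at that last gate. The resolution is that if $w$ were consistent at every gate \emph{except} the last, then the propagated value at the last gate's inputs is a valid partial computation, forcing the last gate's output (under the circuit's own rule) to equal $f(z)$; but $w$'s value on the result column is $w(x_{n+1}) \neq f(z)$ — so in fact $w$ is \emph{already} inconsistent at that gate, meaning the gate does cover it, and we cannot simply delete it for free. The honest fix is that $\mathcal{P}(f)$ is defined as a \emph{minimum} cover, so we only need the inequality $|\mathcal{P}(f)| \le (\text{number of non-input gates of any circuit for } f) = |C|$; we exhibit \emph{some} valid cover of size $|C|$ (or $|C|$ minus the trivially removable duplicates), and minimality of $\mathcal{P}(f)$ does the rest. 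A secondary nuisance is ensuring acyclicity after deduplication, but since $P$ is acyclic and we only remove gates, this is automatic.
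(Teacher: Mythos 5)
Your core argument is the same as the paper's: take the gates of an optimal circuit, show by propagation from the inputs that any $w\in\overline{pPol}(f^{\bullet})$ not covered by any gate would have to be defined on and consistent with every gate, hence would yield a correct computation of $f$ on $z=(w(x_1),\dots,w(x_n))$ with output $w(x_{n+1})\neq f(z)$, a contradiction; minimality of $\mathcal{P}(f)$ then gives the bound. The one thing to repair is your reading of condition (2) of Definition~\ref{def:cover}: it forbids the result column of $f^{\bullet}$ from being an \emph{input} of a gate in the cover, not an output, so the output gate of the circuit need not (and, as you correctly observe, must not) be excised --- the tension you flag in your last paragraph simply dissolves. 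With that reading, all of (1)--(3) hold automatically for an optimal circuit (duplicate outputs or a gate recomputing an input column would contradict optimality, and nothing feeds off the final gate in a straight-line program), so the full set of non-input gates is already a valid $pPol$ cover of size $|C|$ and no surgery is required.
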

\begin{proof}
Let $\mathcal{P}(f)$ be the gates in an optimal circuit $C$ for $f$. Note that conditions (1)-(3) in the definition of a $pPol$ cover (Definition~\ref{def:cover}) are satisfied by the gates of any optimal circuit for $f$. Assume there is a $w \in \overline{pPol}(f^{\bullet})$ that is not covered by $\mathcal{P}(f)$. Hence, for every gate $g$ in $\mathcal{P}(f)$, $w$ is either undefined on an input to $g$ or $w$ is consistent with $g$. If $w$ is undefined on an input to $g$, then $w$ must be undefined on an output of a direct predecessor $g'$ to $g$ (since $C$ is a circuit). Without loss of generality assume that $w$ is defined for all inputs to gates that precedes $g$ in $C$. Hence, $w$ is covered by $g'$ (contradicting that $w$ is not covered by $\mathcal{P}(f)$). Thus, $w$ must be defined on, and consistent with, all the gates in $\mathcal{P}(f)$. By the same reasoning as in Proposition~\ref{prop:consistent}, this is impossible due to $w \in \overline{pPol}(f^{\bullet})$, and we conclude that $w$ is covered by $\mathcal{P}(f)$.
\end{proof}

\begin{proposition}
$|\mathcal{P}(f)|$ is an upper bound on the circuit complexity of $f$.
\end{proposition}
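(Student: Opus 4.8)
The plan is to show that a minimal $pPol$ cover, with its gates put in the right order, is already a circuit for $f$. The heart of the matter will be the sub-claim: \emph{if $\overline{pPol}(f^{\bullet})\neq\emptyset$, then the result column $r$ of $f^{\bullet}$ is the output column of some gate of $\mathcal{P}(f)$.} Granting this I would finish as follows. First, in a minimal cover no gate has an input column $c$ that is neither an input column of $f^{\bullet}$ nor the output column of another gate of the cover: such a gate covers some $w\in\overline{pPol}(f^{\bullet})$ that no other gate covers, and making $w$ undefined on $c$ (which, by condition~(2), is not $r$, and is not an input column of $f^{\bullet}$) does not change the values of $w$ on the columns of $f^{\bullet}$, so it still lies in $\overline{pPol}(f^{\bullet})$ but is now covered by no gate — a contradiction. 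Together with the acyclicity we may assume by the remark following Definition~\ref{def:cover}, this lets me topologically sort the $t:=|\mathcal{P}(f)|$ gates of $\mathcal{P}(f)$ so that the inputs of each gate occur earlier, and prepend the $n$ input columns of $f^{\bullet}$, obtaining a straight-line program with exactly $t$ non-input gates. By condition~(2) the column $r$ is the input of no gate of $\mathcal{P}(f)$, so the (unique, by condition~(1)) gate whose output column is $r$ — which exists by the sub-claim — can be made the last gate; the resulting program is then a circuit of size $t$ computing $f$. (When $\overline{pPol}(f^{\bullet})=\emptyset$ we have $|\mathcal{P}(f)|=0$, $f$ is a projection, and there is nothing to prove.)

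To prove the sub-claim I would argue by contradiction, in the style of Proposition~\ref{prop:consistent}. Suppose $r$ is the output column of no gate of $\mathcal{P}(f)$; then, by condition~(2) and because $r$ is not an input column of $f^{\bullet}$ (as $f$ is not a projection), the column $r$ occurs nowhere in the sorted cover. Fix any $z\in\{0,1\}^{n}$ and any bit $b\neq f(z)$, and define a partial operation $w\in P_{2^n}$ by: $w$ takes the value $z_j$ on the $j$-th input column of $f^{\bullet}$; $w(r):=b$; then, sweeping the gates $g_{n+1},\dots,g_{n+t}$ in topological order, put $w(g_i):=w(g_{i_1})\circ_i w(g_{i_2})$ whenever $w$ is already defined on both input columns of $g_i$; and leave $w$ undefined on every remaining column. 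This is consistent: conditions~(1) and~(3) and the assumption on $r$ make the columns that receive a value pairwise distinct, and any input column of $g_i$ on which $w$ ends up defined was assigned its value before $g_i$ was processed.

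Two observations then finish the argument. First, $w$ applied to the columns of $f^{\bullet}$ gives $(z_1,\dots,z_n,b)$, which is not a row of $f^{\bullet}$ since $b\neq f(z)$; hence $w\in\overline{pPol}(f^{\bullet})$. Second, no gate of $\mathcal{P}(f)$ covers $w$: for each gate $g_i$, either $w$ is undefined on one of its input columns, so $g_i$ does not cover $w$, or $w$ is defined on both, and then $w(g_i)=w(g_{i_1})\circ_i w(g_{i_2})$ by construction, so $w$ is consistent with $g_i$. This contradicts $\mathcal{P}(f)$ being a $pPol$ cover, establishing the sub-claim and hence the proposition. I expect the construction of this uncovered witness to be the one delicate step: one must arrange $w$ to be at once a genuine member of $\overline{pPol}(f^{\bullet})$ (defined on all input columns of $f^{\bullet}$ and disagreeing with $f$ on $r$) and — crucially using that $r$ is absent from the cover — consistent with every gate on which it happens to be defined. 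The remaining ingredients (input-closedness of covers, acyclicity, and placing the $r$-gate last) are the routine work of realizing an acyclic collection of gates as a straight-line program, which is exactly what conditions~(1)-(3) are designed to permit.
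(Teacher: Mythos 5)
Your overall strategy is the same as the paper's: show that the result column $r$ must appear as the output of some gate of a minimal cover, show that every gate input is either an input column of $f^{\bullet}$ or the output of another gate, and then assemble the gates into a circuit whose last gate outputs $r$. Your ``dangling input'' argument (take the $w$ covered only by $g$, make it undefined on the offending column) is exactly the paper's. For the sub-claim that $r$ is an output column, the paper uses a one-line witness --- the $w\in P_{2^n}$ that is the projection onto the $i$th coordinate everywhere except on $r$, where it is the negated projection; this is automatically consistent with every gate because no gate column equals $r$. Your propagation-based construction of a partial witness is a correct (and more general, fusion-style) alternative, but it is heavier than necessary and, as written, already presupposes a topological order of the gates.

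That points to the one genuine gap: you never establish that a minimal cover satisfying conditions (1)--(3) is acyclic; you ``assume'' acyclicity by appeal to the remark after Definition~\ref{def:cover}. But that remark only says the conditions \emph{could be replaced} by an acyclicity requirement --- under the definition actually in force, acyclicity is something to be proven, and your topological sort (both in the sub-claim and in the final assembly) is not available without it. The paper closes this with a third minimality argument in the same spirit as the one you already use: pick a gate $g$ on a putative cycle, take the $w\in\overline{pPol}(f^{\bullet})$ covered only by $g$, and let $w'$ be $w$ made undefined on every output column of the cycle's gates; conditions (2) and (3) guarantee $w'$ is still a witness, $g$ no longer covers it (it is undefined on an input of $g$), and condition (1) rules out any other gate newly covering it via an undefined output. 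Adding this step (or an equivalent one) would complete your proof.
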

\begin{proof}
Given an optimal cover $\mathcal{P}(f)$, unless $f$ is a projection of one of its inputs (in which case $\mathcal{P}(f)$ is empty), we note that 
the result column of $f^{\bullet}$ (i.e., the last column $f^{\bullet}$) which we denote $r$, must be a column of one of the gates in $\mathcal{P}(f)$. If not, consider $w \in P_{2^n}$ that is a projection on its $i$th coordinate for all inputs except $r$, for which $w$ is the negation of its $i$th coordinate. Thus, $w \in \overline{pPol}(f^{\bullet})$, and $w$ is consistent with all gates in $\mathcal{P}(f)$, which is a contradiction.

Assume there is an input to a gate $g \in \mathcal{P}(f)$ that is not an input to $f$ and that is not an output of a gate in $\mathcal{P}(f)$. Since $\mathcal{P}(f)$ is minimal there is a $w \in \overline{pPol}(f^{\bullet})$ that is covered only by $g$ (and no other gate in $\mathcal{P}(f)$). Let $w'$ be undefined on the input to $g$ assumed above, but otherwise identical to $w$. Hence, $w'$ is not covered by $\mathcal{P}(f)$ and $w' \in \overline{pPol}(f^{\bullet})$, which is a contradiction
with the fact that $\mathcal{P}(f)$ is a cover. 

Thus, the gates in $\mathcal{P}(f)$ would form an optimal circuit computing $f$, should no cycles be present. Utilizing conditions (1)-(3) in the definition of a $pPol$ cover (Definition~\ref{def:cover}), we can show that cycles are impossible. Assume that $\mathcal{P}(f)$ contains a cycle and pick an arbitrary gate $g$ on the cycle. Again, since $\mathcal{P}(f)$ is minimal there is a $w \in \overline{pPol}(f^{\bullet})$ that is covered only by $g$. Let $w'$ be identical to $w$ except that $w'$ is undefined on every output (and hence at least one input) of all the gates in the cycle that $g$ belongs to. By conditions (2) and (3), none of the columns in $f^{\bullet}$ can be part of a cycle, and hence $w' \in \overline{pPol}(f^{\bullet})$. To see that $w'$ is not covered by $\mathcal{P}(f)$, note that $g$ does not cover $w'$ as $w'$ is undefined on an input to $g$. If another gate $g'$ in $\mathcal{P}(f)$ cover $w'$ it must be because $w'$ (as opposed to $w$) is undefined on the output of $g'$, implying that $g'$ has the same output as a gate on the cycle, which is impossible by condition (1).
\end{proof}

\begin{corollary}
$P/poly$ is the class of languages defined by functions having polynomial $pPol$ covers, i.e., $\{f_n \in B_n\}_{n \geq 0}$ with $|\mathcal{P}(f_n)| \leq poly(n)$.
\end{corollary}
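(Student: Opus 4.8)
The plan is to show the corollary follows almost immediately by combining the two preceding propositions with the standard non-uniform definitions. First I would recall the two bounds just established: for every $f \in B_n$, the size $|\mathcal{P}(f)|$ of a minimal $pPol$ cover is both a lower bound and an upper bound on the circuit complexity of $f$, hence the circuit complexity of $f$ equals $|\mathcal{P}(f)|$ exactly (up to the additive $n$ for input gates, which is absorbed in the $poly$). So for a family $\{f_n\}$, the function $n \mapsto |\mathcal{P}(f_n)|$ is polynomially bounded if and only if the function $n \mapsto (\text{circuit size of } f_n)$ is polynomially bounded.

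Next I would spell out the translation to languages. A language $L$ is in $P/poly$ precisely when $L = \{x : f_{|x|}(x) = 1\}$ for some family $\{f_n \in B_n\}$ whose circuit sizes are bounded by $poly(n)$. By the equality in the previous paragraph, this holds exactly when $|\mathcal{P}(f_n)| \leq poly(n)$. Therefore the class of languages defined by families with polynomial $pPol$ covers coincides with $P/poly$. I would note one small point of care: the propositions were phrased for single-output $f \in B_n$, which is exactly the setting of a language's slices $L_n$, so no multi-output extension is needed here; and the propositions' side remarks (that an optimal circuit satisfies conditions (1)--(3), and that if $f$ is a projection then $\mathcal{P}(f)$ is empty and circuit size is $0$) handle the degenerate cases cleanly.

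I do not expect a genuine obstacle — the corollary is a packaging of the two propositions — but the one place to be careful is making sure the "up to the input gates" bookkeeping and the "optimal circuit $\Leftrightarrow$ minimal cover" correspondence are invoked in the right direction on each side of the equivalence, so that neither inequality is lost. Concretely, for the inclusion $P/poly \subseteq \{f : |\mathcal{P}(f_n)| \leq poly(n)\}$ I would use the lower-bound proposition in the form "$|\mathcal{P}(f)| \le$ circuit size", and for the reverse inclusion I would use the upper-bound proposition in the form "circuit size $\le |\mathcal{P}(f)|$" to build, from a polynomial cover, a polynomial circuit deciding $L_n$. That completes the argument.
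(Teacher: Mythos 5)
Your proposal is correct and is exactly the intended argument: the paper states this corollary without proof as an immediate consequence of the two preceding propositions, which together identify $|\mathcal{P}(f)|$ with the circuit complexity of $f$, and you invoke each inequality in the right direction for each inclusion. Nothing further is needed.
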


\section{Polymorphisms and non-deterministic circuits}
\label{sec:total}
We first introduce a special type of non-deterministic circuits called total single-valued non-deterministic circuits (TSVND circuits). These circuits have appeared previously in the literature mainly in relation to derandomization of Arthur-Merlin games, see for example~\cite{GST03,MV05}.
\begin{definition}\cite{GST03}
A TSVND circuit is a non-deterministic circuit $C(x,y)$ with three possible outputs $0,1$ and $quit$, such that for each $x \in \{0,1\}^n$, either $\forall y C(x,y) \in \{0,quit\}$ or $\forall y C(x,y) \in \{1,quit\}$. That is, there can be no $y,y'$ such that $C(x,y) = 1$ and $C(x,y') = 0$, and we define $C(x) = b \in \{0,1\}$ if there exist $y$ such that $C(x,y) = b$, and $C(x) = quit$ if there is no such $y$. Finally, we require $C$ to define a total function on $\{0,1\}^n$, i.e., for each $x \in \{0,1\}^n$ $C(x) \neq quit$. 
\end{definition}

The following fact about TSVND circuit complexity is easy to realize.
\begin{proposition}
$f \in B_n$ has TSVND circuit complexity $O(s(n))$ if and only if $f$ has non-deterministic circuit complexity $O(s(n))$ and co-non-deterministic circuit complexity $O(s(n))$.
\end{proposition}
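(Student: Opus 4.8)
The plan is to prove both implications by direct circuit constructions, each incurring only an additive $O(1)$ number of extra gates, so that the asymptotic bounds are preserved.

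For the forward direction, fix a TSVND circuit $C(x,y)$ of size $O(s(n))$ computing $f$ and commit to a concrete Boolean encoding of its three-valued output: let $C$ have two output gates $p$ and $q$, with the convention that $C(x,y)=quit$ when $p(x,y)=1$ and $C(x,y)=q(x,y)$ when $p(x,y)=0$. I would then obtain a non-deterministic circuit $N$ for $f$ by adding one gate computing $\neg p \land q$ as the new output: $N$ accepts $x$ iff $\exists y\, C(x,y)=1$, which by the definition of a TSVND circuit equals $C(x)=f(x)=1$. Symmetrically, taking $p \lor q$ as the new output yields a co-non-deterministic circuit $M$ for $f$, since $M(x,y)=0$ iff $C(x,y)=0$, and $\exists y\, C(x,y)=0$ iff $f(x)=0$. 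Both $N$ and $M$ have size $|C|+O(1)=O(s(n))$.

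For the converse, let $N(x,y)$ be a non-deterministic circuit and $M(x,z)$ a co-non-deterministic circuit for $f$, each of size $O(s(n))$, with disjoint blocks of non-deterministic inputs $y$ and $z$. I would build a TSVND circuit $C(x,(y,z))$ that runs $N$ and $M$ in parallel, sets $a=N(x,y)$ and $b=M(x,z)$, and outputs (in the encoding above) $p=\neg a \land b$ and $q=a$; that is, it outputs the value $1$ when $a=1$, the value $0$ when $a=b=0$, and $quit$ when $a=0$ and $b=1$. This adds only $O(1)$ gates on top of $N$ and $M$, so $|C|=O(s(n))$.

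The only step requiring care is verifying that $C$ satisfies the TSVND conditions and computes $f$. Here one uses that $N$ computes $f$ (so $f(x)=0$ forces $a\equiv 0$, while $f(x)=1$ guarantees some $y$ with $a=1$) and that $M$ computes $f$ (so $f(x)=1$ forces $b\equiv 1$, while $f(x)=0$ guarantees some $z$ with $b=0$): if $f(x)=1$ then $C(x,(y,z))\in\{1,quit\}$ with value $1$ attained for a good $y$, and if $f(x)=0$ then $C(x,(y,z))\in\{0,quit\}$ with value $0$ attained for a good $z$; in both cases $C$ avoids $quit$ on some witness, so $C$ is total and $C(x)=f(x)$. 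Beyond this bookkeeping, the main point is simply that, since the statement is phrased with $O(\cdot)$, the additive constant number of gates and the merging of the two separate non-deterministic input blocks are harmless.
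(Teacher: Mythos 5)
Your proposal is correct and follows essentially the same approach as the paper: the forward direction remaps the $quit$ output to $0$ (resp.\ $1$) to obtain the non-deterministic (resp.\ co-non-deterministic) circuit, and the converse runs both circuits in parallel on disjoint nondeterministic input blocks and combines their outputs into the three-valued TSVND output. Your version merely makes the two-bit encoding of $quit$ and the case analysis more explicit than the paper does.
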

\begin{proof}
Given a non-deterministic circuit $C_1$ for $f$ (with non-deterministic inputs $y_1$) and a co-non-deterministic circuit $C_2$ for $f$ (with non-deterministic inputs $y_2$) we construct a TSVND circuit $C$ for $f$ (with non-deterministic inputs $y_1,y_2$) by using $C_1$ and $C_2$ as sub circuits. Let $C(x,y_1,y_2) = 1$ if $C_1(x,y_1) = 1$, $C(x,y_1,y_2) = 0$ if $C_2(x,y_2) = 0$, and $C(x,y_1,y_2) = quit$ otherwise.

Given a TSVND circuit $C(x,y)$ for $f$ we construct a non-deterministic circuit $C_1(x,y)$ for $f$ by changing all $quit$ outputs in $C$ to $0$. Similarly, we construct a co-non-deterministic circuit $C_2(x,y)$ for $f$ by changing all $quit$ outputs in $C$ to $1$.
\end{proof}

\begin{proposition}
$|\mathcal{T}(f)|$ is a lower bound on the TSVND circuit complexity of $f$.
\end{proposition}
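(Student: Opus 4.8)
The plan is to extract, from an arbitrary TSVND circuit $C$ computing $f$, a $Pol$ cover for $f$ with only $O(1)$ more gates than $C$, which gives $|\mathcal{T}(f)| \le |C| + O(1)$. First I would apply (the proof of) the preceding proposition to $C$: modifying only the output, obtain a nondeterministic circuit $C_1$ for $f$ (replace each $quit$ by $0$) and a co-nondeterministic circuit $C_2$ for $f$ (replace each $quit$ by $1$). These reuse all of $C$'s gates and add only $O(1)$ new gates, so the set of all gates occurring in $C_1$ or in $C_2$ has size $|C| + O(1)$.

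Next, for each $x \in \{0,1\}^n$ pick a nondeterministic assignment $y_x$ with $C(x,y_x) = f(x)$ (possible since $C(x) \ne quit$ and $C$ computes $f$), and let $D = \{(x,y_x) : x \in \{0,1\}^n\}$, a set of $2^n$ inputs to $C_1$ and $C_2$. Restrict every gate of $C_1$ and of $C_2$ to the rows of $D$; deleting rows preserves the property that a gate's output column is the $\land$/$\lor$/complement of its input column(s), so each restriction is a legitimate gate in the sense of Section~\ref{sec:partial}. Two facts hold on $D$: the input columns of $C_1,C_2$ corresponding to the real inputs $x_1,\dots,x_n$ restrict exactly to the first $n$ columns of $f^\bullet$; and since $C$ never outputs $quit$ on $D$, both $C_1$ and $C_2$ output $f(x)$ on row $x$, so the output gate of $C_1$ and the output gate of $C_2$ each restrict to the last column $r$ of $f^\bullet$. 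Let $\mathcal{T}$ be the collection of all these restricted gates, so $|\mathcal{T}| = |C| + O(1)$.

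It remains to show $\mathcal{T}$ is a $Pol$ cover for $f$. Take $w \in \overline{Pol}(f^\bullet)$ and suppose no gate of $\mathcal{T}$ covers $w$. Then $w$ is consistent with every restricted gate of $C_1$, and propagating $w$ column-wise through $C_1$ restricted to $D$, exactly as in Proposition~\ref{prop:consistent}, shows that the tuple of $w$-values on the gates of $C_1|_D$ is the honest computation of $C_1$ on the input $(z,b)$, where $z = (w(x_1),\dots,w(x_n))$ with $x_j$ the $j$-th column of $f^\bullet$, and $b$ is the tuple of $w$-values of the nondeterministic-input columns of $C_1|_D$. In particular $C_1(z,b) = w(r)$, and since $w \in \overline{Pol}(f^\bullet)$ we have $w(r) \ne f(z)$. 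If $w(r) = 1$, then $f(z) = 0$, and as $C_1$ is a nondeterministic circuit computing $f$, the value $f(z)=0$ forces $C_1(z,y)=0$ for \emph{every} $y$, contradicting $C_1(z,b)=1$. If $w(r)=0$, then $f(z)=1$, and the symmetric argument with $C_2$ (co-nondeterministic for $f$, so $f(z)=1$ forces $C_2(z,y)=1$ for every $y$) contradicts $C_2(z,b)=0$. Hence some gate of $\mathcal{T}$ covers $w$, so $\mathcal{T}$ is a $Pol$ cover and $|\mathcal{T}(f)| \le |\mathcal{T}| = |C| + O(1)$.

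The main obstacle, and the reason for passing to the one-sided pair $C_1,C_2$ rather than working with $C$ directly, is that on a plain nondeterministic circuit the output along a wrong guess branch need not equal $f(z)$, so the propagation argument does not by itself produce a contradiction; only the one-sided guarantee (all branches agree with $f(z)$ when $f(z)=0$ for $C_1$, and when $f(z)=1$ for $C_2$) closes the case, which is exactly why both circuits and the case split on $w(r)$ are needed. A secondary, minor nuisance is the additive $O(1)$ incurred when re-encoding the $quit$ output into $\{0,1\}$; with a careful choice of encoding it can be absorbed into the gates of $C$, and in any event it is consistent with the ``up to a constant factor'' statement announced in the introduction.
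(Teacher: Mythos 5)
Your construction is essentially the paper's: fix a witness $y_x$ for each $x$, restrict every gate to the resulting $2^n$ rows, and show that a $w$ consistent with all restricted gates would propagate (as in Proposition~\ref{prop:consistent}) to an honest computation outputting $w(r)\neq f(z)$. The only real divergence is the endgame: the paper derives the contradiction directly from the single-valuedness and totality of the TSVND circuit $C$ (the honest computation forces $C(w(x))=w(r)$, while $C$ computing $f$ forces $C(w(x))=f(w(x))$), whereas you first split $C$ into the pair $C_1,C_2$ and case on $w(r)$; this is correct but makes the detour through both circuits unnecessary and costs you the additive $O(1)$ (and a doubling of the gate count) that the direct argument avoids, giving $|\mathcal{T}(f)|\leq|C|$ exactly.
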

\begin{proof}
Given an optimal TSVND circuit $C(x,y)$ for $f$ (with $n$ (ordinary) inputs $x$ and $m$ non-deterministic inputs $y$) we construct a cover $\mathcal{T}$, by for each $x$ fixing a witness $y$ such that $C(x,y) = b \in \{0,1\}$.
Denote by $fY$ the $(n+m)\times 2^n$ matrix resulting from appending to each input $x$ the corresponding witness $y$ and sorting the rows in lexicographic order. Each gate $g$ of $C$ (which is a $2^{n+m}$-gate) is transformed into $2^n$-gate $g'$.
For each $1 \leq i \leq 2^n$ the $i$th row of $g'$ is the input(s) and output of $g$ when $C$ is passed the inputs $(x,y)$ where $(x,y)$ is the $i$th row of $fY$. Let $\mathcal{T}$ denote the resulting collection of $2^n$-gates. Note that the number of $2^n$-gates in $\mathcal{T}$ is $|C|$. 

To prove that $\mathcal{T}$ is a $Pol$ cover for $f$, assume to the contrary that there is some $w \in \overline{Pol}(f^{\bullet})$ that is not covered by any gate in $\mathcal{T}$. 
Order the gates of $C$ such that no gate has an output which is the input of a gate earlier in the order, with the last gate in the order being the output gate. Order the result columns of all the gates in $\mathcal{T}$ in the exact same order and append them to $fY$. Denote the resulting $(n+m+|C|)\times 2^n$ matrix by $fY\mathcal{T}$ and let $v$ be the vector (of length $(n+m+|C|)$) resulting from applying $w$ to the columns of $fY\mathcal{T}$. 

We claim that $v$ represents a correct computation of $(w(x),w(y))$ in $C$. If not, then there is a gate $g_i$ 
with inputs $v_{i_1}$ and $v_{i_2}$ such that $v_{i_1} \circ_i v_{i_2} \neq v_i$ ($v_i$ is the output of $g_i$). But this is impossible since applying $w$ to the gate $g'_i$ (i.e., the gate in $\mathcal{T}$ corresponding to $g_i$) results in $(v_{i_1}, v_{i_2}, v_i)$, 
and $v_{i_1} \circ_i v_{i_2} = v_i$ since $w$ by assumption is not covered by any gate in $\mathcal{T}$. Hence, $v$ represents a correct computation $(w(x),w(y))$ in $C$. Note that the last element of $v$ (i.e., $v_r$ with $r=n+m+|C|$) is $f(w(x))$ since the last column of $fY\mathcal{T}$ is the result of the output gate. Thus, $C$ outputs $f(w(x))$ on input $(w(x),w(y))$, i.e., $C(w(x)) = f(w(x))$.

By the assumption that $w \in \overline{Pol}(f^{\bullet})$, we have $f(w(x)) \neq v_r$, and by the reasoning above we have $v_r = f(w(x))$.
Hence, $\mathcal{T}$ is a $Pol$ cover for $f$, and $|\mathcal{T}(f)|$ is a lower bound on the TSVND circuit complexity of $f$.
\end{proof}

\begin{proposition}
$f$ has TSVND circuit complexity $O(|\mathcal{T}(f)|)$.
\end{proposition}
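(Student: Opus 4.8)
The plan is to convert an optimal $Pol$ cover $\mathcal{T}(f)=\{g_1,\dots,g_k\}$, with $k=|\mathcal{T}(f)|$, into a TSVND circuit of size $O(k)$ by letting the non-deterministic bits guess a single row of the cover and then checking local consistency. First I would dispose of the degenerate case: if $f$ is a projection $x_j$, then (exactly as for $pPol$ covers) the result column of $f^{\bullet}$ equals an input column, $\overline{Pol}(f^{\bullet})=\emptyset$, so $\mathcal{T}(f)=\emptyset$ and the gate-free circuit that outputs $x_j$ already has size $O(|\mathcal{T}(f)|)$. So assume $f$ is not a projection; then the result column $r$ of $f^{\bullet}$ differs from every input column. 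Let $\mathcal{C}\subseteq\{0,1\}^{2^n}$ be the set of distinct column vectors occurring among the inputs and outputs of $g_1,\dots,g_k$, so $|\mathcal{C}|\le 3k$. A preliminary claim, proved just like the analogous step in the $pPol$ upper bound, is that $r\in\mathcal{C}$: otherwise the total function $w\in B_{2^n}$ that sends $v\mapsto v[i]$ for $v\neq r$ and $r\mapsto\neg r[i]$ (for a fixed coordinate $i$) is an anti-polymorphism of $f^{\bullet}$ that acts as a projection, hence is consistent with every gate of $\mathcal{T}(f)$, contradicting that $\mathcal{T}(f)$ is a cover.

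Next I would describe the circuit $C(x,b)$, where the non-deterministic input $b=(b_c)_{c\in\mathcal{C}}$ provides one bit per column in $\mathcal{C}$. Using $O(1)$ gates each, $C$ computes: for every $g_i$ a bit $\mathrm{gateok}_i=[\,b_{g_{i_1}}\circ_i b_{g_{i_2}}=b_{g_i}\,]$ (reading the inputs and output of $g_i$ as elements of $\mathcal{C}$, with the obvious one-input variant for a $\neg$ gate); and for every $c\in\mathcal{C}$ that coincides with the $j$-th input column of $f^{\bullet}$ a bit $\mathrm{inputok}_j=[\,b_c=x_j\,]$. There are $k$ bits of the first kind and at most $|\mathcal{C}|\le 3k$ of the second, so this costs $O(k)$ gates; all this information is fixed once $f$ and the optimal cover are fixed, so no uniformity is needed. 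The circuit then forms $\mathrm{ok}$, the conjunction of all these bits, with a balanced fan-in-two AND tree of size $O(k)$, and outputs $b_r$ if $\mathrm{ok}=1$ and $quit$ otherwise. Hence $|C|=O(k)$.

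For correctness I would verify two things. Totality: the honest guess $b_c=c[x]$ (the entry of column $c$ in the row of $f^{\bullet}$ indexed by $x$ in lexicographic order) makes every $\mathrm{gateok}_i$ true, since $g_i=g_{i_1}\circ_i g_{i_2}$ componentwise, and every $\mathrm{inputok}_j$ true, since the row indexed by $x$ begins with $x$; thus $C(x,b)=b_r=r[x]=f(x)\neq quit$, so $C(x)\neq quit$ for every $x$. Single-valuedness and correctness: suppose $C(x,b)\neq quit$, so $b$ passes all checks. Define a total $w\in B_{2^n}$ by $w(c)=b_c$ for $c\in\mathcal{C}$, $w$ equal to $x_j$ on each input column of $f^{\bullet}$ not already in $\mathcal{C}$, and $w\equiv 0$ elsewhere; this is well defined and, by the $\mathrm{inputok}$ checks, agrees with $x_j$ on the $j$-th column of $f^{\bullet}$ for all $j$. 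Applying $w$ to the columns of $f^{\bullet}$ therefore yields the tuple $(x,b_r)$. If $b_r\neq f(x)$ this tuple is not a row of $f^{\bullet}$, so $w\in\overline{Pol}(f^{\bullet})$ and some $g_i\in\mathcal{T}(f)$ covers it, i.e. $b_{g_{i_1}}\circ_i b_{g_{i_2}}\neq b_{g_i}$, contradicting $\mathrm{gateok}_i=1$. Hence $b_r=f(x)$, so every non-$quit$ output of $C$ on $x$ equals $f(x)$, and $C$ is a TSVND circuit for $f$ of size $O(|\mathcal{T}(f)|)$.

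The gate-counting and the degenerate-projection case are routine; the only genuine content is the implication ``$b$ passes all checks $\Rightarrow b_r=f(x)$'', and the main thing to get right there is the bookkeeping distinguishing vectors that lie in $\mathcal{C}$ (constrained by the checks) from input columns of $f^{\bullet}$ handled ``for free'' in the definition of $w$, together with the observation that $r\notin\mathcal{C}$ happens only when $f$ is a projection. Combined with the preceding proposition this gives $\Theta(|\mathcal{T}(f)|)$ for the TSVND circuit complexity of $f$.
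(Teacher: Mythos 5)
Your proof is correct and follows essentially the same route as the paper: guess one bit per distinct column of the cover, check consistency with every gate (and agreement with the actual input on columns of $f^{\bullet}$), output the bit guessed for the result column $r$, and justify single-valuedness by turning any passing but wrong guess into an uncovered $w\in\overline{Pol}(f^{\bullet})$, using the same preliminary claim that $r$ must occur in the cover. The only cosmetic difference is that the paper wires the deterministic inputs directly into the stored gates where you instead guess those bits and add equality checks; your write-up is if anything more explicit about the degenerate projection case and the construction of the total function $w$.
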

\begin{proof}
Given a $Pol$ cover $\mathcal{T}$ for $f$, we show how to construct a TSVND circuit $C$ for $f$ of size $O(|\mathcal{T}|)$.
First note that the result column of $f^{\bullet}$ (i.e., the last column of $f^{\bullet}$), which we denote by $r$, must be a column of one of the matrices in $\mathcal{T}(f)$. If not, consider $w \in B_{2^n}$ that is a projection on its $i$th coordinate for all inputs except $r$, for which $w$ is the negation of its $i$th coordinate. Thus, $w \in \overline{Pol}(f^{\bullet})$, and $w$ is consistent with all gates in $\mathcal{T}$, which is a contradiction with the definition of a cover.

Name the columns of $f^{\bullet}$ $x_1,\dots,x_n,x_{n+1}$ (note that $x_{n+1} = r$). Name each column of $\mathcal{T}$ by the corresponding $x_i$, in case it appears in $f^{\bullet}$, otherwise name it $y_i$ such that identical columns get the same name and no two different columns get the same name. 
The $x_i$'s are the deterministic inputs to $C$ and the $y_i$'s are the non-deterministic inputs. We hard code each gate from $\mathcal{T}$ in the circuit $C$ with the names given, i.e., if the gate is a $\land$ gate with the first two columns being $x_2$, $y_5$ and the last being $x_1$, we store it as $x_2 \land y_5 = x_1$.

On input $(x,y) = (x_1,\dots,x_n,x_{n+1},y_1,\dots,y_m)$, $C$ outputs $quit$ if $(x,y)$ is not a consistent assignment to the variables in the stored gates, and $x_{n+1} = r$ otherwise. First note that for all $x$ there is an $y$ such that $C(x,y) \in \{0,1\}$, namely, let $y$ be the assignment resulting from taking the row identified by $x$ in $\mathcal{T}(f)$. Secondly, for each $x$ there can be no $y$ and $y'$ such that
$C(x,y) = 1$ and $C(x,y') = 0$, since then one of $(x,y)$ or $(x,y')$ would correspond to a $w \in \overline{Pol}(f^{\bullet})$ that is not covered by $\mathcal{T}$. Hence, $C$ is a TSVND circuit computing $f$.
  
Note that as the amount of information that we need to hard code in $C$ is at most a constant times $|\mathcal{T}|$, and the operation of the circuit is a simple evaluation, $C$ has size $O(|\mathcal{T}|)$.
\end{proof}

\begin{corollary}
$NP/poly \cap coNP/poly$ is the class of languages defined by functions having polynomial $Pol$ covers, i.e., $\{f_n \in B_n\}_{n \geq 0}$ with $|\mathcal{T}(f_n)| \leq poly(n)$.
\end{corollary}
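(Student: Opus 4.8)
The plan is to chain together the three propositions of this section; no new machinery is needed. First, combine the lower bound (that $|\mathcal{T}(f)|$ bounds the TSVND circuit complexity of $f$ from below) with the upper bound (that $f$ has TSVND circuit complexity $O(|\mathcal{T}(f)|)$). Since both directions lose only a constant factor, for a family $\{f_n\}_{n\geq 0}$ the cover sizes $|\mathcal{T}(f_n)|$ are bounded by a polynomial in $n$ if and only if the TSVND circuit complexities of the $f_n$ are bounded by a polynomial in $n$; the constant factors hidden in the $O(\cdot)$ are absorbed into "$poly$".

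Second, apply the proposition characterizing TSVND circuit complexity: $f$ has TSVND circuit complexity $O(s(n))$ iff it has both non-deterministic circuit complexity $O(s(n))$ and co-non-deterministic circuit complexity $O(s(n))$. Letting $s$ range over polynomials, a family $\{f_n\}$ has polynomial TSVND circuit complexity iff it has both polynomial non-deterministic and polynomial co-non-deterministic circuit complexity. The forward implication is immediate; for the reverse, if the non-deterministic complexity is bounded by $p(n)$ and the co-non-deterministic complexity by $q(n)$, then both are bounded by $p(n)+q(n)$, which is still a polynomial, so the proposition applies with $s(n)=p(n)+q(n)$.

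Third, unwind the definitions from the Preliminaries: $NP/poly$ is exactly the class of languages decided by polynomial-size non-deterministic circuit families, and $coNP/poly$ is exactly the class decided by polynomial-size co-non-deterministic circuit families. Hence a language $L$ lies in $NP/poly \cap coNP/poly$ iff the associated function family has both polynomial non-deterministic and polynomial co-non-deterministic circuit complexity, which by the second step is iff it has polynomial TSVND circuit complexity, which by the first step is iff it has polynomial $Pol$ covers. Concatenating these equivalences yields the corollary.

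I do not expect a genuine obstacle here: the statement is a bookkeeping consequence of the three propositions. The only points that deserve explicit care are (i) spelling out the "$O(\cdot)$-to-$poly$" translation and the "sum of two polynomials is a polynomial" step so that the chain is a real iff rather than a single implication, and (ii) checking that the non-deterministic circuit model used in this paper — which, like the circuits underlying $P/poly$, is non-uniform and permits hard-wired gates — is precisely the model in the standard definitions of $NP/poly$ and $coNP/poly$ recalled earlier, so that no uniformity gap is introduced when passing between covers, TSVND circuits, and the complexity classes.
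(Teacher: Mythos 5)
Your proposal is correct and is exactly the argument the paper intends: the corollary is stated as an immediate consequence of the two propositions equating $|\mathcal{T}(f)|$ with TSVND circuit complexity (up to constants) and the proposition equating TSVND complexity with simultaneous non-deterministic and co-non-deterministic complexity, which is precisely the chain you spell out. The extra care you take with the $O(\cdot)$-to-$poly$ translation and the sum of two polynomials is sound but routine.
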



\bibliographystyle{abbrv}
\bibliography{references}

\end{document}